\documentclass[A4paper,10pt,journal,twocolumn,twoside]{IEEEtran}

\usepackage{epsfig,amsfonts,subfigure,amsbsy,bm,mathrsfs}
\usepackage{graphicx,cite,amssymb,amsmath,color}
\bibliographystyle{IEEEtran}
\usepackage{amsmath,amssymb,amscd,amsthm,mathabx}
\usepackage{mathtools}
\usepackage{multirow}
\usepackage[nolist]{acronym}
\mathtoolsset{showonlyrefs=true}

\newtheorem{prop}{Proposition}

\def\forcemath#1{\ifmmode #1 \else $#1$\fi}

\newcommand{\B}{\forcemath{\bm{B}}}
\newcommand{\proto}{\mathscr{P}}

\newcommand{\BMA}[2]{\forcemath{{\left(#1 \,\, #2 \right)}}}
\newcommand{\BMB}[6]{\forcemath{{\left(\begin{array}{c|cc}
	#1 & #3 & #5 \\
	#2 & #4 & #6
	\end{array}
	\right)}}}

\newcommand{\thrspa}{\forcemath{{\delta^\star_{\mathsf{SPA}}}}}
\newcommand{\thre}{\forcemath{{\delta^\star_{\mathsf{E}}}}}

\newcommand{\ensemble}{\forcemath{{\mathscr{C}}}}
\newcommand{\ens}[1]{\forcemath{{\ensemble_{\mathsf{#1}}}}}

\newcommand{\neigh}[1]{\forcemath{\mathcal{N}\left(#1\right)}}
\newcommand{\vn}{\forcemath{\mathsf{v}}}
\newcommand{\cn}{\forcemath{\mathsf{c}}}
\newcommand{\msg}[2]{\forcemath{m_{#1\rightarrow#2}}}
\newcommand{\mch}{\forcemath{m_{\mathrm{ch}}}}
\newcommand{\sign}{\forcemath{\mathrm{sign}}}

\newcommand{\vnt}{\forcemath{\mathsf{V}}}
\newcommand{\cnt}{\forcemath{\mathsf{C}}}

\newcommand{\degCN}{\ensuremath{d_c}}

\newcommand{\weight}[1]{\mathsf{wht}\left(#1 \right)}

\newcommand{\field}{\mathbb{F}}

\renewcommand{\vec}[1]{\ensuremath{\bm{#1}}}

\newcommand{\WF}{\mathrm{WF}}
\newcommand{\WFisd}{\WF_{\mathsf{ISD}}}
\newcommand{\WFdist}{\WF_{\mathsf{dist}}}
\newcommand{\WFdec}{\WF_{\mathsf{dec}}}

\newcommand{\mat}[1]{\ensuremath{\bm{#1}}}
\newcommand{\F}{\ensuremath{\mathbb{F}}}
\renewcommand{\vec}[1]{\ensuremath{\bm{#1}}}
\newcommand{\Dec}[1]{\ensuremath{\text{DEC}_{#1}}}

\begin{document}
	
\begin{acronym}
\acro{LDPC}{low-density parity-check}
\acro{MDPC}{Moderate-density parity-check}
\acro{VN}{variable node}
\acro{CN}{check node}
\acro{BP}{belief propagation}
\acro{EXIT}{extrinsic information transfer}
\acro{APP}{a posteriori probability}
\acro{MI}{mutual information}
\acro{FER}{frame error rate}
\acro{QC}{quasi cyclic}
\acro{SPA}{sum-product algorithm}
\acro{DE}{density evolution}
\acro{BSC}{binary symmetric channel}
\acro{MMT}{May, Meurer and Thomae}
\acro{ISD}{information set decoding}
\acro{MET}{multi-edge type}
\acro{DOOM}{decode-one-out-of-many}
\end{acronym}	
	
\title{Protograph-based Quasi-Cyclic MDPC Codes for McEliece Cryptosystems}
\author{Gianluigi Liva and Hannes Bartz
\thanks{G. Liva and H. Bartz are with with the Institute of Communication and Navigation of the Deutsches Zentrum f\"{u}r Luft- und Raumfahrt (DLR), 82234 Wessling,
	   Germany (e-mail: {\tt \{gianluigi.liva,hannes.bartz\}@dlr.de})}}

\thispagestyle{empty} \setcounter{page}{1} 
\pagenumbering{gobble}

\maketitle

\begin{abstract}
	In this paper, ensembles of quasi-cyclic  moderate-density parity-check (MDPC) codes based on protographs are introduced and analyzed in the context of a McEliece-like cryptosystem. The proposed  ensembles significantly improve the error correction capability of the regular MDPC code ensembles that are currently considered for post-quantum cryptosystems without increasing the public key size. The proposed ensembles are analyzed in the asymptotic setting via density evolution, both under the sum-product algorithm and a low-complexity (error-and-erasure) message passing algorithm. The asymptotic analysis is complemented at finite block lengths by Monte Carlo simulations.
	 The enhanced error correction capability remarkably improves the scheme robustness with respect to (known) decoding attacks.
\end{abstract}

\begin{IEEEkeywords}
	McEliece cryptosystem, moderate-density parity-check codes, quasi-cyclic codes, information set decoding.
\end{IEEEkeywords}

\section{Introduction}\label{sec:intro}

{\ac{MDPC} codes \cite{ouzan2009MDPC} have been recently proposed as underlying coding scheme for McEliece-like cryptosystems \cite{Misoczki13:MDPC,baldi2014qc}. The family of \ac{MDPC} codes admit a parity-check matrix of moderate density\footnote{The existence of a \ac{MDPC} matrix for a binary linear block code does not rule out the possibility that the same code fulfills a (much) sparser parity-check matrix. As in most of the literature, we neglect the probability that a code defined by a randomly-drawn moderate parity check matrix admits a sparser parity-check matrix.}, yielding codes  with large minimum distance.
In~\cite{Misoczki13:MDPC}{,} a McEliece cryptosystem based on \ac{QC}-\ac{MDPC} codes that defeats information set decoding attacks on the dual code due to the moderate density parity-check matrix{,} is presented. For a given security level, the \ac{QC}-\ac{MDPC} cryptosystem allows for very small key sizes compared to other McEliece variants.} 

{In this paper, we introduce a variation on the scheme of~\cite{Misoczki13:MDPC}. In particular, we investigate the adoption of \ac{MDPC} ensembles based on protographs \cite{Thorpe03:proto} to improve the error correction capability of the code underlying the cryptosystem. We focus on protographs containing state \acp{VN}. The use of state \acp{VN} allows designing codes that can be decoded over an extended Tanner graph that is sparser than the reduced Tanner graph associated with the code{'s} (moderate-density) parity-check matrix.} 

{For the introduced ensembles, we relate the density of the extended Tanner graph to the density of the reduced Tanner graph. The lower density of the extended Tanner graph allows remarkable gains in terms of error correction capability with respect to the ensembles of~\cite{Misoczki13:MDPC}. The improvement is analyzed both in the asymptotic setting via \ac{DE} \cite{richardson01:capacity} and at finite block length via Monte Carlo simulations, under two decoding algorithms: The \ac{SPA} and the less complex Algorithm E from  \cite{richardson01:capacity}. For one of the proposed ensembles, a gain of $40\%$ in the weight of error patterns (decodable with a target error probability) is observed with respect to the ensembles of \cite{Misoczki13:MDPC}, rendering the scheme more resilient to \ac{ISD} attacks \cite{stern1988method}.}
\section{Preliminaries}\label{sec:prelim}

\subsection{Circulant Matrices}

A binary circulant matrix $\bm{A}$ of size $Q$ is a $Q\times Q$ matrix with coefficients in $\field_2$ obtained by cyclically shifting its first row $\bm{a}=\left(a_0, a_1, \ldots, a_{Q-1}\right)$ to the right, yielding
\begin{equation}
	\bm{A}=\left(\begin{array}{cccc}
		a_0 & a_1 & \cdots & a_{Q-1}\\
		a_{Q-1} & a_0 & \cdots & a_{Q-2}\\
		\vdots & \vdots & \ddots &\vdots\\
		a_1 & a_2 & \cdots & a_{0}\\
	\end{array}\right).
\end{equation} 
The set of $Q\times Q$ circulant matrices together with the matrix multiplication and addition forms a commutative ring and it is isomorphic to the polynomial ring $\left(\field_2[X]/\left(X^Q-1\right),+,\cdot\right)$. 
In particular, we can associate to  the circulant $\bm{A}$ a polynomial $a(X)=a_0+a_1X+\ldots+a_{Q-1}x^{Q-1} \in \field_2[X]$. 
Consider two $Q\times Q$ circulants $\bm{A}$ and $\bm{B}$ and their associated polynomials $a(X)$ and $b(X)$. 
Denote by $\bm{C}=\bm{A}+\bm{B}$ and $\bm{D}=\bm{A}\bm{B}$. 
Then, the circulants $\bm{C}$ and $\bm{D}$ are associated to the polynomials $c(X)=a(X)+b(X)$ and $d(X)=a(X)\cdot b(X) \mod{\left(X^Q-1\right)}$. We indicate the vector of coefficients of a polynomial $a(X)$ as $\bm{a}=\left(a_0, a_1, \ldots, a_{Q-1}\right)$. The weight of a polynomial $a(X)$ is the number of its non-zero coefficients.
We indicate both weights with the operator $\weight{\cdot}$, i.e., $\weight{a(X)}=\weight{\bm{a}}$.

\subsection{QC-MDPC-based Cryptosystems}

A binary \ac{MDPC} code of length $n$, dimension $k$ and row weight $\degCN$ is defined by a binary parity-check matrix $\mat{H}$ whose rows have (moderate) Hamming weight $\degCN$. For $n=NQ$, dimension $k=KQ$, redundancy\footnote{{We assume here parity-check matrices without redundant rows.}} $m=MQ$ with $M=N-K$ for some integer $Q$, the parity-check matrix $\mat{H}(X)$ of a \ac{QC}-\ac{MDPC}
 code in polynomial form is a $M\times N$ matrix. Without loss of generality we consider in the following codes with $M=1$. 
This family of codes covers a wide range of code rates and is of particular interest for cryptographic applications since the parity-check and generator matrices can be described in a very compact way.
The parity-check matrix of \ac{QC}-\ac{MDPC} codes with $M=1$ has the form 
\begin{equation}\label{eq:H_QC-MDPC-simple}
\mat{H}(X)=
\begin{pmatrix}
h_{0}(X) & h_{1}(X) & \dots & h_{N-1}(X)
\end{pmatrix}.
\end{equation}
Let $\Dec{\mat{H}}(\cdot)$ be an efficient decoder for the code defined by the parity-check matrix $\mat{H}$. The cryptosystem operates in the following manner.

\subsubsection{Key Generation} The private key is generated as a parity-check matrix  of the form \eqref{eq:H_QC-MDPC-simple} with $\weight{h_{i}(X)}=\degCN^{(i)}$ for $i=0,\dots,N$.
	The matrix $\mat{H}$ with row weight $\degCN=\sum_{i=0}^{N-1}\degCN^{(i)}$ is the \emph{private} key. The \emph{public} key is the corresponding binary $k\times n$ generator matrix $\mat{G}$ in systematic form\footnote{{In \cite{Misoczki13:MDPC} the use of a CCA2-secure conversion is proposed, which allows using  $\mat{G}$ in systematic-form without leaking information.}} (note that the generator matrix can be described by $KQ$ bits, yielding a small public key size).

\subsubsection{Encryption} 
	To encrypt a plaintext $\vec{u}\in\F_2^{k}$ a user computes the {cyphertext} $\vec{c}\in\F_2^n$ using the public key $\mat{G}$ as 
	${\vec{c}=\vec{x}+\vec{e}}$
	{where $\vec{x}=\vec{u}\mat{G}$ and $\vec{e}$} is an error vector uniformly chosen from all vectors from $\F_2^{n}$ of Hamming weight $\weight{\bm{e}}= e$ .

\subsubsection{Decryption} 
 To decrypt a {cyphertext} the authorized recipient uses the private key  to obtain ${\hat{\vec{x}}=\Dec{\mat{H}}(\vec{c})}$.
{Since $\mat{G}$ is in systematic form the plaintext corresponds (in case of correct decoding) to the first $k$ bits of $\hat{\vec{x}}$.}

\subsection{Protograph-based Codes}

A protograph $\proto$ \cite{Thorpe03:proto} is a small bipartite graph comprising a set of $N_0$ \acp{VN} (also referred to as \ac{VN} types) $\left\{\vnt_0, \vnt_1, \ldots, \vnt_{N_0-1}\right\}$ and a set of $M_0$ \acp{CN} (i.e., \ac{CN} types)  $\left\{\cnt_0, \cnt_1, \ldots, \cnt_{M_0-1}\right\}$. A \ac{VN} type $V_j$ is connected to a \ac{CN} type $\cnt_i$ by $b_{ij}$ edges. A protograph can be equivalently represented in matrix form by an $M_0\times N_0$ matrix $\B$. The $j$th column of $\B$ is associated to  \ac{VN} type $\vnt_j$ and the $i$th row of $\B$ is associated to \ac{CN} type $\cnt_i$. The $(i,j)$ element of $\B$ is  $b_{ij}$. 
A larger graph (derived graph) can be obtained from a protograph by applying a copy-and-permute procedure. The protograph is copied $Q$ times, and the edges of the different  copies are permuted preserving the original protograph connectivity: If a type-$j$ \ac{VN} is connected to a type-$i$ \ac{CN} with $b_{ij}$ edges in the protograph, in the derived graph each type-$j$ \ac{VN} is connected to $b_{ij}$ distinct type-$i$ \acp{CN} (observe that multiple connections between a \ac{VN} and a \ac{CN} are not allowed in the derived graph). The derived graph is a Tanner graph with $n_0=N_0Q$ \acp{VN} and $m_0=M_0Q$ \acp{CN} that can be used to represent a binary linear block code.
A protograph $\proto$ defines a code ensemble $\ensemble$. For a given protograph $\proto$, consider all its possible derived graphs with $n_0=N_0Q$ \acp{VN}. The ensemble $\ensemble$ is the collection of codes associated to the derived graphs in the set.

Protographs allow specifying graphs which contain \acp{VN} which are associated to codeword symbols, as well as \acp{VN} which are not associated to codeword symbols. The latest class of \acp{VN} are often referred to as \emph{state} or \emph{punctured} \acp{VN}. The term ``punctured'' is used since the code associated with the derived graph can be seen as a punctured version of a longer code associated with the same graph for which all the \acp{VN} are associated to codeword bits. The introduction of state \acp{VN} in a code graph allows designing codes with a remarkable performance \cite{Divsalar09:ProtoJSAC}. 
The \ac{QC}{-}\ac{MDPC} codes introduced in \cite{Misoczki13:MDPC} admit a protograph representation. In particular, for the rate $1/2$ case, the base matrix has the form
\begin{align}
	\bm{B}=\left(
	\begin{array}{cc}
		b_{00} & b_{01} \label{eq:ensA}
	\end{array}
	\right). 
\end{align}
Here, $N_0=N=2$ and  $M_0=M=1$.
The expansion of the protograph can be performed in a structured manner yielding a \ac{QC}{-}\ac{MDPC} with  parity-check matrix in polynomial form
$\bm{H}(X)=\left(
	\begin{array}{cc}
		h_{00}(X) & h_{01}(X) \\
	\end{array}
	\right)$
with $\weight{h_{ij}(X)}=b_{ij}$.
In the remainder of the paper, we will focus on rate $1/2$ codes, and we will denote the ensemble defined by \eqref{eq:ensA} as the \emph{reference} ensemble or $\ens{A}$. In particular, we will consider as a study case the shortest public key size considered in \cite{Misoczki13:MDPC} {for $80$ bit security} where $Q=4801$ is the public key size, and $n=9602$ is the \ac{MDPC} code block length. For this specific case, the reference ensemble is obtained by setting $b_{00}=b_{01}=45$ {yielding a total row weight of $90$. This choice of parameters was introduced in \cite{Misoczki13:MDPC} to obtain the security level of $80$ bits.}

\subsection{Decoding Algorithms}

Denote by $\vn_0,\vn_1,\ldots,\vn_{n_0-1}$ the $n_0$ \acp{VN} in the code Tanner graph, and by  $\cn_0,\cn_1,\ldots,\cn_{m_0-1}$ then $m_0$ \acp{CN}. The neighborhood of a \ac{VN} \vn~is \neigh{\vn}, and similarly \neigh{\cn} denotes the neighborhood of the \ac{CN} \cn. The message from~\vn~to~\cn~is denoted by $\msg{\vn}{\cn}$, and the message from~\cn~to~\vn~is denoted by $\msg{\cn}{\vn}$. The channel message is $\mch$. If a codeword bit is associated to~\vn~(i.e.,~\vn~is not a state \ac{VN}), then we denote it with {a slight abuse of notation} by $x$, with $x=+1$ stands for a $0$ and $x=-1$ stands for a $1$. The bit in the cyphertext associated to $x$ is $c$, with $c=-x$ if an error is introduced, $c=x$ otherwise. For state \acp{VN}, no cyphertext bits are produced at the encryption stage. We consider next two decoding algorithms. The first algorithm is the classical \ac{SPA}, for which we introduce a generalization allowing an attenuation of the extrinsic information produced at the \acp{CN}. As we shall see, the attenuation can be used as a heuristic method to improve the performance at low error rates. The second algorithm is the Algorithm E introduced in \cite{richardson01:capacity}, which reduces the decoding complexity by limiting the message alphabet to the ternary set $\{-1,0,+1\}$. As observed in \cite{richardson01:capacity}, Algorithm E benefits also from the introduction of a heuristic scaling parameter. Here, nevertheless, the parameter is used to amplify the channel message, and it plays a role not only at low error rates, but also in the so-called waterfall performance of the code. While in \cite{richardson01:capacity} it was suggested to vary the scaling parameter with the iteration number, here we will keep the scaling parameter fixed throughout the iterations. In both cases, the scaling parameter is indicated by $\omega$.

\subsubsection{Scaled Sum-Product Algorithm}
The channel message is initialized as 
\[
\mch=c \ln \frac{n-e}{n}
\]
for the \acp{VN} associated to a cyphertext bit (recall that $c=\pm1$ is the cyphertext bit), whereas $\mch=0$ for the state \acp{VN}.
At the \acp{VN},
\begin{equation}
	\msg{\vn}{\cn}=\mch+\sum_{\cn'\in\neigh{\vn}\backslash\cn}{\msg{\cn'}{\vn}} \label{eq:SPA_VN}
\end{equation}
while at the \acp{CN}
\begin{equation}
	\msg{\cn}{\vn}=\omega 2\tanh^{-1}\left[ \prod_{\vn'\in\neigh{\cn}\backslash\vn}\tanh\left(\frac{\msg{\vn'}{\cn}}{2}\right)\right] \label{eq:SPA_CN}
\end{equation}
with final decision, after iterating \eqref{eq:SPA_VN}, \eqref{eq:SPA_CN} a given number of times, given by 
\begin{equation}
	\hat{x}=\sign\left[\mch+\sum_{\cn\in\neigh{\vn}}{\msg{\cn}{\vn}}\right] \label{eq:SPA_final}.
\end{equation}

\subsubsection{Algorithm E}
The channel message is initialized as $\mch=c$
for the \acp{VN} associated to a cyphertext bit (recall that $c=\pm1$ is the cyphertext bit), whereas $\mch=0$ for the state \acp{VN}.
At the \acp{VN},
\begin{equation}
	\msg{\vn}{\cn}=\sign\left[\omega \mch+\sum_{\cn'\in\neigh{\vn}\backslash\cn}{\msg{\cn'}{\vn}}\right] \label{eq:ALGE_VN}
\end{equation}
while at the \acp{CN}
\begin{equation}
	\msg{\cn}{\vn}= \prod_{\vn'\in\neigh{\cn}\backslash\vn}\msg{\vn'}{\cn} \label{eq:ALGE_CN}
\end{equation}
with final decision, after iterating \eqref{eq:ALGE_VN}, \eqref{eq:ALGE_CN} a given number of times, given by 
\begin{equation}
	\hat{x}=\sign\left[\mch+\sum_{\cn\in\neigh{\vn}}{\msg{\cn}{\vn}}\right] \label{eq:ALGE_final}.
\end{equation}
{Observe that the scaling parameter is de-activated (i.e., it is set to $1$) in the final decision.}
\section{Protograph-based MDPC Ensembles}\label{sec:de}

We introduce next protograph-based \ac{MDPC} ensembles which make use of state \acp{VN}. For all the ensembles we adopt a base matrix in the form
\begin{align}
	\bm{B}=\left(
	\begin{array}{c|cc}
		1 & b_{01} & b_{02} \\
		b_{10} & b_{11} & b_{12} \\
	\end{array}
	\right) \label{eq:otherens}
\end{align}
where the column on the left of the vertical delimiter is associated to state \acp{VN}. The expansion of the protograph can be performed in a structured manner yielding a \ac{QC}{-}\ac{MDPC}. In particular, we first obtain a binary matrix in polynomial form
\begin{align}
	\bm{\Gamma}(X)=\left(
	\begin{array}{c|cc}
		1 & \gamma_{01}(X) & \gamma_{02}(X) \\
		\gamma_{10}(X) & \gamma_{11}(X) & \gamma_{12}(X) \\
	\end{array}
	\right) \label{eq:gamma}
\end{align}
with $\weight{\gamma_{ij}(X)}=b_{ij}$. The parity-check matrix  is
\begin{align}
	\bm{H}(X)=\left(
	\begin{array}{cc}
		h_{00}(X) & h_{01}(X) \\
	\end{array}
	\right) \label{eq:H}
\end{align}
with 
$h_{00}(X)=\left[\gamma_{11}(X)+\gamma_{01}(X)\gamma_{10}(X)\right]{\mod (X^Q-1)}$
and $h_{01}(X)=\left[\gamma_{12}(X)+\gamma_{02}(X)\gamma_{10}(X)\right]{\mod (X^Q-1)}$.
In this case, $N_0=3$ and $M_0=2$, yielding a Tanner graph (associated with the matrix $\bm{\Gamma}$) with $n_0=3Q$ \acp{VN}. By comparing with the block length $n=2Q$, one finds that the code defined by \eqref{eq:H} can be described as punctured version of a longer code which has $\bm{\Gamma}$ as {its} parity-check matrix, where the first $Q$ bits in each codeword are punctured.
The following proposition established a relationship between the weights of the rows of \eqref{eq:gamma} and the weight of the rows of \eqref{eq:H}.
\begin{prop}\label{prop:weight}
	The Hamming weight of each row of $\bm{H}$ is upper bounded by $b_{11}+b_{01}b_{10}+b_{12}+b_{02}b_{10}$.
\end{prop}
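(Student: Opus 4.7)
The plan is to reduce the claim to two elementary properties of the Hamming weight on $\field_2[X]/(X^Q-1)$: subadditivity under addition and submultiplicativity under polynomial multiplication modulo $X^Q-1$. Since $\bm{H}(X)$ has only one block row, every row of the expanded binary circulant matrix $\bm{H}$ is a cyclic shift of the coefficient vector of the concatenated polynomial $(h_{00}(X)\,\,h_{01}(X))$, so all rows share the same Hamming weight, namely $\weight{h_{00}(X)}+\weight{h_{01}(X)}$. It therefore suffices to bound each of these two polynomial weights.

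First I would record the two lemmas I need. For any $a(X),b(X)\in\field_2[X]/(X^Q-1)$, the coefficient-wise XOR implies $\weight{a(X)+b(X)}\le\weight{a(X)}+\weight{b(X)}$, with equality when the supports are disjoint. For the product, writing $a(X)=\sum_{i\in\mathsf{supp}(a)}X^i$, one has $a(X)b(X)\bmod(X^Q-1)=\sum_{i\in\mathsf{supp}(a)}X^i b(X)\bmod(X^Q-1)$; each term is a cyclic shift of $b(X)$ and hence has weight exactly $\weight{b(X)}$, so subadditivity gives $\weight{a(X)b(X)\bmod(X^Q-1)}\le\weight{a(X)}\,\weight{b(X)}$.

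Next I would apply these bounds to the two entries of $\bm{H}(X)$ as defined in the excerpt:
\begin{align}
\weight{h_{00}(X)} &\le \weight{\gamma_{11}(X)}+\weight{\gamma_{01}(X)\gamma_{10}(X)\bmod(X^Q-1)} \\
&\le b_{11}+b_{01}b_{10},\\
\weight{h_{01}(X)} &\le \weight{\gamma_{12}(X)}+\weight{\gamma_{02}(X)\gamma_{10}(X)\bmod(X^Q-1)} \\
&\le b_{12}+b_{02}b_{10},
\end{align}
using $\weight{\gamma_{ij}(X)}=b_{ij}$ from the construction. Summing the two bounds yields the claimed row-weight upper bound $b_{11}+b_{01}b_{10}+b_{12}+b_{02}b_{10}$.

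There is no real obstacle here: the only subtlety is to note that reduction modulo $X^Q-1$ cannot increase the Hamming weight (it can only collapse terms via XOR of coinciding monomials), so the cyclic-convolution step preserves the product bound, and that the bound is generally not tight because cancellations can occur both in the sum $\gamma_{11}+\gamma_{01}\gamma_{10}$ and inside the cyclic product itself. I would mention these two sources of slack in a single concluding sentence to justify the use of ``upper bounded'' in the statement.
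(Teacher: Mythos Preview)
Your proof is correct and matches the paper's approach exactly: the paper's entire proof is the single line ``Follows by triangle's inequality,'' which is precisely the subadditivity of the Hamming weight that you invoke (together with the submultiplicativity under cyclic convolution that makes the product term work). Your write-up simply unpacks that one line in full detail.
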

\begin{proof}
	Follows by  triangle's inequality.	
\end{proof}
In the remainder of the paper, coefficients of the base matrix are chosen such that upper bound given  by Proposition~\ref{prop:weight} matches the weight of the rows of the parity check matrix of the reference ensemble, i.e., $b_{01}b_{10}+b_{11}+b_{02}b_{10}+b_{12}=90$.

The number of distinct matrices~\eqref{eq:gamma} with a base matrix of the form~\eqref{eq:otherens}, i.e. the number of different private keys in an ensemble $\ens{}$, is
\begin{align}\label{eq:sizeKeySpace}
 {\mathsf{N}}(\ens{})=\frac{1}{Q}\binom{Q}{1}\prod_{i,j}\binom{Q}{b_{ij}}=\prod_{i,j}\binom{Q}{b_{ij}}.
\end{align} 
The size of the key space ${\mathsf{N}}(\ens{})$ for different ensembles is given in Table~\ref{tab:thresholds}.
{We provide next a \ac{DE} analysis of protograph-based ensembles.}\footnote{{The codes that will be adopted in practice are \ac{QC}, and hence represent a sub-ensemble of the (larger) protograph-based \ac{MDPC} ensemble. Anyhow, we consider the results of the analysis as accurate in predicting the error correction capability of the corresponding protograph-based \ac{QC}-\ac{MDPC} ensembles.}}

\medskip

\subsubsection{Sum-Product Algorithm}

When the \ac{SPA} is used, we resort to quantized \ac{DE}.  We refer to  \cite{Chung01:DE} for the details. The extension to protograph ensembles is straightforward and follows the footsteps of \cite{liva_pexit,Liva2013:Proto_TWCOM}. Simplified approaches based on the Gaussian approximation are discarded due to the large \ac{CN} degrees \cite{chung2001analysis} used by the \ac{MDPC} code ensembles.

\medskip

\subsubsection{Algorithm E}

In the following, we provide an extension of the \ac{DE} analysis of protograph ensembles for Algorithm E, which was originally introduced in \cite{richardson01:capacity} for unstructured ensembles. Rather than stating the complete \ac{DE}, we sketch the analysis by showing how message probabilities are updated at \acp{VN} and \acp{CN}.
Let us consider the transmission over a \ac{BSC} with error probability $\delta$. We make use of the conventional assumption of the all-zero codeword transmission. It follows that all the message probabilities derived next are conditioned to the transmission of a zero value. Due to the mapping $(0 \leftrightarrow +1, \, 1 \leftrightarrow -1)$ at the decoder input, we have that messages exchanged by \acp{VN} and \acp{CN} take values in $\{-1,0,+1\}$, whereas the messages associated with the channel observations take values in $\{-\omega,0,+\omega\}$. In either case, $0$ is the value associated with an erasure (which replaces the channel observation in the case of a state \ac{VN}).
We consider first a degree-$d$ \ac{CN} $\cn$ of a given type. We assume the \ac{CN} to be connected to $d$ \acp{VN}, each of different type (having some~\acp{VN} of the same type being a particular case). For ease of notation, we assume the $d$ \acp{VN} to be $\vn_0, \vn_1, \ldots, \vn_{d-1}$. At a given iteration, we denote by 
$q_{\ell}^{(i)}$ the probability that the message from $\vn_{i}$ to $\cn$ takes value $\ell$. Similarly, we denote by 
$p_{\ell}^{(i)}$ the probability that the message from  $\cn$ to  $\vn_{i}$ takes value $\ell$, after the \ac{CN} elaboration. We have that
\begin{align}
p_{-1}^{(i)}&=\frac{1}{2}\left[\prod_{j\neq i}\left(q_{+1}^{(j)}+q_{-1}^{(j)} \right) - \prod_{j\neq i}\left(q_{+1}^{(j)}-q_{-1}^{(j)} \right)    \right] \label{eq:DEE_CN_1}\\
p_{0}^{(i)}&=1-\prod_{j\neq i}\left(1- q_0^{(j)}   \right)
\label{eq:DEE_CN_2}\\
p_{+1}^{(i)}&=\frac{1}{2}\left[\prod_{j\neq i}\left(q_{+1}^{(j)}+q_{-1}^{(j)} \right) + \prod_{j\neq i}\left(q_{+1}^{(j)}-q_{-1}^{(j)} \right)    \right].  \label{eq:DEE_CN_3}
\end{align}
It follows that \eqref{eq:DEE_CN_1}, \eqref{eq:DEE_CN_2} and \eqref{eq:DEE_CN_3} fully describe the evolution of the message probabilities at the \acp{CN}. Let us consider now a degree-$d$  \ac{VN} $\vn$ of a given type. We assume the \ac{VN} to be connected to $d$ \acp{CN}, each of different type (having some \acp{CN} of the same type being a particular case). Again, for ease of notation, we assume the $d$ \acp{CN} to be $\cn_0, \cn_1, \ldots, \cn_{d-1}$.
We shall introduce next the  probability vectors 
\begin{align}
\bm{p}^{(i)}:=\left(p_{-1}^{(i)}, p_{0}^{(i)}, p_{+1}^{(i)}  \right) \quad \bm{q}^{(i)}:=\left(q_{-1}^{(i)}, q_{0}^{(i)}, q_{+1}^{(i)}  \right).
\end{align}
{Moreover, we introduce the channel message probability vector $\bm{m}:=\left(m_{-\omega},m_{-\omega+1},\dots,m_{+\omega}  \right)$}
with $m_0=1$ and $m_\ell=0$ for all $\ell\neq 0$ if $\vn$ is a state \ac{VN}, whereas $m_{-\omega}=\delta$, $m_{+\omega}=1-\delta$ and  $m_\ell=0$ for all $\ell\neq \pm \omega$ otherwise. We introduce the intermediate probability vector
$\bm{z}^{(i)}:= \left[ \bigoasterisk_{j\neq i}  \bm{p}^{(j)}  \right] \circledast \bm{m}$
i.e., $\bm{z}^{(i)}$ is the convolution of the probability vectors associated to the messages at the input of $\vn$, with the exception of the one received from $\cn_i$. We have that
\begin{align}
q_{-1}^{(i)}=\sum_{\ell<0}{z^{(i)}_\ell} \qquad
q_{0}^{(i)}=z^{(i)}_0  \qquad
q_{+1}^{(i)}=\sum_{\ell>0}{z^{(i)}_\ell}. \label{eq:DEE_VN_3}
\end{align}
Observe that  \eqref{eq:DEE_VN_3} fully describe the evolution of the message probabilities at the \acp{CN}. By iterating them with \eqref{eq:DEE_CN_1}, \eqref{eq:DEE_CN_2} and \eqref{eq:DEE_CN_3} for all \ac{VN}/\ac{CN} types specified by the protograph, one can track the evolution of the message probabilities.  
Denote by $\bm{f}=\left(f_{-1},f_0,f_{+1}\right)$ the final \ac{APP} estimate probability vector at $\vn$ (obtained after a given number of iterations). We have that 
\begin{align}
f_{-1}=\sum_{\ell<0}{z_\ell} \qquad f_{0}=z_0 \qquad f_{+1}=\sum_{\ell>0}{z_\ell}
\end{align}
with
$
\bm{z}:= \left[ \bigoasterisk_{j}  \bm{p}^{(j)}  \right] \circledast \bm{m}'
$
{and $\bm{m}':=(m'_{-1},m'_{0},m'_{+1})$ and $m'_{-1}=\delta$, $m'_{0}=0$ and $m'_{+1}=1-\delta$ if $\vn$ is not a state \ac{VN}, and $m'_{-1}=0$, $m'_{0}=1$ and $m'_{+1}=0$ otherwise.}
Successful decoding convergence is declared if $f_{-1}\rightarrow 0$ and $f_{0}\rightarrow 0$ as the iteration count tends to infinity, for all \ac{VN} types.

\medskip

\subsubsection{Asymptotic Error Correction Capability Estimates}

{To estimate the error correction capability of the ensembles under consideration, we computed the iterative decoding threshold over a \ac{BSC} with error probability $\delta$. The thresholds are computed under the \ac{SPA} and Algorithm E decoding. We denote the former as $\thrspa$ and the latter as $\thre$. Table \ref{tab:thresholds} provides the decoding threshold for some ensembles with base matrix in the form \eqref{eq:otherens}, together with the reference ensemble from  \cite{Misoczki13:MDPC}. In particular, we provide the products $n\thrspa$ and $n\thre$ as rough estimates of the error correction capability of codes drawn from ensembles with finite block length $n$, for the study case of $n=9602$. The values represent a first estimate of the maximum error pattern weights for which decoding succeeds with high probability. The ensemble that exhibits the largest iterative decoding threshold under both \ac{SPA} and Algorithm E decoding is  ensemble $\ens{C}$. The gain in the weight of error patterns (decodable with high probability) with respect to the reference ensemble is here in the order of $50\%$ under the \ac{SPA}, while under Algorithm E the gain reduces to $20\%$.}

\renewcommand{\arraystretch}{1.4}
\newcommand\hh{\noalign{\vskip 1mm} \hline \noalign{\vskip 1mm}} 
\newcommand{\mr}[2]{\multirow{ #1}{*}{#2}}
\begin{table}[t]
	\begin{center}
		\caption{Thresholds computed for different protographs, $n=9602$}\label{tab:thresholds}
		\begin{tabular}{ccccc}
			\hline\hline
			Ensemble & Base matrix & ${\mathsf{N}}(\ens{})$ & $n\thrspa\, (\omega)$ & $n\thre\, (\omega)$ \\ \hline
			\mr{2}{\ens{A}}& \mr{2}{\BMA{45}{45}}  & \mr{2}{$2^{715}$} & $113\,(1)$ & $57\,(1)$ \\
			&			   &       &     $112\,(0.5)$       & $106\,(14)$  \\ \hh  						   
			\mr{2}{\ens{B}}&\mr{2}{\BMB{1}{5}{8}{5}{8}{5}} & \mr{2}{$2^{328}$} & \mr{2}{$132$}& $25\,(1) $ \\			   
			&				&   			 &          & $57\,(4) $ \\		\hh	
			\mr{2}{\ens{C}}&\mr{2}{\BMB{1}{2}{22}{1}{22}{1}} & \mr{2}{$2^{446}$} & $171\,(1)$ & $ 43 \,(1) $ \\			   
			&				 &  			 &      $155\,(0.8)$     & $128\,(8) $ \\	\hh   			   			
		\end{tabular}\vspace{-0.3cm}
	\end{center}
\end{table}

\subsection{Error Correction Capability at Finite Length}

{Monte Carlo simulations have been performed to measure the actual error correction capability of the codes with base matrices in the form \eqref{eq:otherens}. In particular,  for a given ensemble and the given block length $n=9602$, \ac{QC}{-}\ac{MDPC} codes have been obtained by expanding the base matrix with $4801 \times 4801$ circulant matrices of suitable weight, with the circulants picked uniformly at random. The results, in terms of block error rate vs. weight of the error pattern, for the reference ensemble and for the ensemble $\ens{C}$ are depicted in Figure \ref{fig:9602SPA100it}. The performance measured with Algorithm E are in good accordance with the gain predicted by the iterative decoding threshold analysis. At a block error rate of $10^{-5}$ the code from $\ens{C}$ allows operating with around $20$ errors more than what is allowed by the code drawn from the reference ensemble.} 

{Under \ac{SPA} decoding, the codes drawn from both ensembles showed performance curves with signs of slope change at moderate-high error floors, preventing the achievement of low block error rates with reasonably-high error pattern weights.  We conjecture that the reason for this behavior might be found in the numerous trapping sets affecting the dense graphs (recall that the code construction does not leverage on any girth optimization technique, due to the need of generating the private key uniformly at random). To mitigate this effect, we made use of the attenuation parameter $\omega$ in \eqref{eq:SPA_CN} to reduce the magnitude of the extrinsic estimated provided at the output of the \acp{CN}.\footnote{{The use of soft information scaling was used in a similar manner in \cite{Pyndiah98} to improve the performance of block turbo codes in the error floor region.}} This heuristic approach turns out to be effective in improving the error correction capability at low block error rates. The choice of the scaling factors has been carried out by searching via simulations the largest value of $\omega$ for which no sign of error floor appears at a block error rate greater than $10^{-5}$. Surprisingly, the use of a scaling parameter $\omega=0.5$ with codes from the reference ensemble allows attaining remarkable gain at low error rates without sacrificing the waterfall region performance (the later result in accordance with the \ac{DE} analysis). For the code designed from ensemble $\ens{C}$, the introduction of the scaling coefficient entails a visible loss in the waterfall region (again, in accordance with the \ac{DE} analysis). We expect that a further optimization of the algorithm (e.g., by allowing variable scaling coefficients across iterations/edge types) may reduce the loss in the waterfall region. {Even accounting} for the loss introduced by the scaling coefficient, codes drawn from ensemble $\ens{C}$ show gains in error correction capability in order of $40\%$ {compared to \ens{A}}.}

\begin{figure}[t]
	\centering
	\includegraphics[width=0.93\columnwidth]{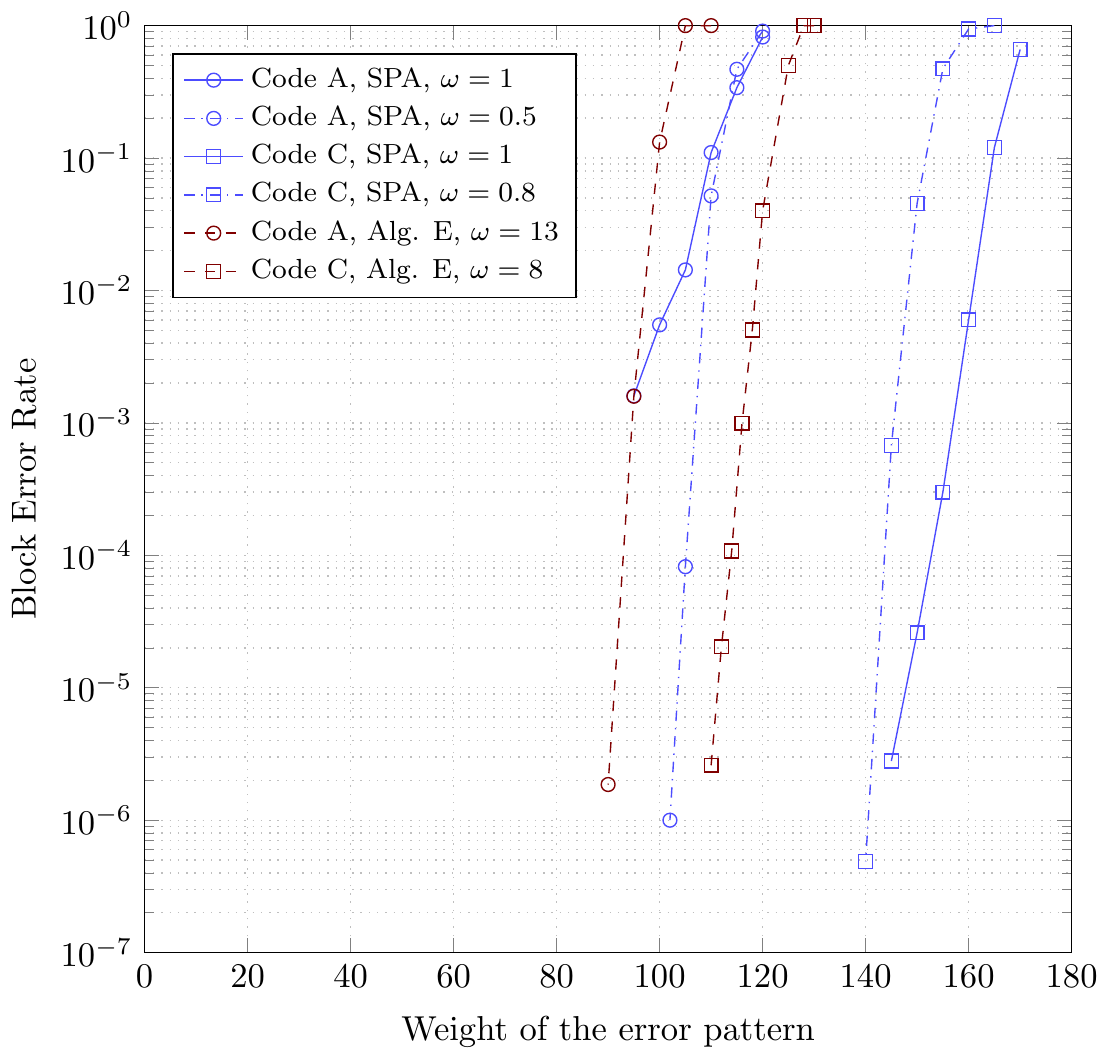}
	\caption{{Block error rate for some codes from the ensembles $\ens{A}$ and $\ens{C}$ in Table~\ref{tab:thresholds}. \ac{SPA} (with and without scaling) and Algorithm E with $100$ iterations.}}
	\label{fig:9602SPA100it}
\end{figure}

\section{Security}\label{sec:security}

In this section we estimate the security level of the proposed protograph-based \ac{MDPC} McEliece cryptosystem. 
For the analysis we use the pessimistic assumption that the system is broken as soon as the \ac{QC}-\ac{MDPC} parity-check matrix $\bm{H}(X)$ in~\eqref{eq:H} is reconstructed. 
This means that we assume that the effort for obtaining the extended  matrix $\bm{\Gamma}(X)$ (see~\eqref{eq:gamma}) from $\bm{H}(X)$ is below the security level of the cryptosystem.

{We denote by $\WFisd(n,m,e)$ the cost of decoding an error pattern of Hamming weight $e$ with a  $(n,m)$ linear code (\ac{ISD} is assumed here).}
The cost of distinguishing a key, i.e. to recover one {weight-$\degCN$} row of the (sparse) parity check matrix $\bm{H}(X)$ is denoted by $\WFdist(n,m,\degCN)$. 
For the \ac{QC} case the cost of recovering the whole secret key equals $\WFdist(n,m,\degCN)$.
We compute the work factors of the key distinguishing attack $\WFdist(n,m,\degCN)$ and the work factor for the key recovery attack and the decoding attack $\WFdec(n,m,\degCN)$ for the \ac{QC}-\ac{MDPC} McEliece cryptosystem according to~\cite[Tab.~1]{Misoczki13:MDPC}, i.e.,
$\WFdist(n,m,\degCN)=\WFisd(n,n-m,\degCN)/m$ and $\WFdec(n,m,e)=\WFisd(n,m,e)/{\sqrt{m}}$.
The work factor estimates include the possible gains obtained by using the decode-one-out-of-many approach~\cite{sendrier2011-DOOM}.
We use the non-asymptotic results from~\cite[Sec.~3.3]{hamdaoui2013non} to estimate the  work factor for the May-Meurer-Thomae variant of \ac{ISD}~\cite{may2011decoding}.
Consider the \ac{QC}-\ac{MDPC} code ensemble $\ens{A}$ for $80$ bit security  with $n=9602, k=4801, \degCN=90$ from~\cite{Misoczki13:MDPC}.
For decoding we consider Algorithm E.
The work factor for the key distinguishing attack for $\ens{A}$ and $\ens{C}$ is $\WFdist(9602,4801,90)=2^{80.6}$. 
{Figure~\ref{fig:9602SPA100it} shows that for $\ens{C}$ we have a block error rate of $10^{-6}$ for $e\approx102$ whereas for $\ens{A}$ we have $e\approx 84$.
It follows that the work factor of the decoding attack for ensemble is, according to~\cite[Tab.~1]{Misoczki13:MDPC}, $2^{81.0}$ for the reference ensemble, and $2^{98.3}$ for the scheme based on the ensemble $\ens{C}$.}

\section{Conclusions}\label{sec:conclusions}
Protograph-based moderate-density parity-check (MDPC) code ensembles are introduced and analyzed in the context of a McEliece-like cryptosystem. The proposed  ensembles significantly improve the error correction capability of the MDPC code ensembles that are currently considered for post-quantum cryptosystems, without increasing the public key size.
	The enhanced error correction capability remarkably improves the robustness with respect to decoding attacks.


\end{document}